\documentclass[sigconf,nonacm]{acmart}

\setcopyright{none}
\renewcommand\footnotetextcopyrightpermission[1]{}

\usepackage{amsmath}
\usepackage{booktabs}
\usepackage{tabularx}
\usepackage{colortbl}
\usepackage[table]{xcolor}

\newtheorem{theorem}{Theorem}
\title[Remembering What Lasts, Adapting to What Changes]{Remembering What Lasts, Adapting to What Changes:\\
Better User Understanding for LLM-Enhanced Sequential Recommendation}

\author{Yu Hou}
\affiliation{%
  \institution{School of Mathematics and Computing (Computational Science and Engineering), Yonsei University}
  \city{Seoul}
  \country{Republic of Korea}}
\email{houyu@yonsei.ac.kr}

\begin{document}

\begin{abstract}
Large language model (LLM)-enhanced recommender systems leverage rich item semantics to support personalized recommendation. However, semantic representations alone do not determine which historical behaviors reflect persistent preferences and which mainly indicate recent interests, leaving an important aspect of user understanding unresolved. Recent advances in LLM inference show that newly available information can be used to refine the internal state during inference, thereby improving subsequent predictions. Inspired by this principle, we propose \textsf{\textbf{MATE}} (Memory Adaptation with Temporal Evidence), an adaptive user modeling framework for LLM-enhanced sequential recommendation. \textsf{\textbf{MATE}} first evaluates each newly observed interaction from two temporal perspectives: whether it is repeatedly supported by historical behaviors and whether it is consistent with recent interactions. The resulting temporal evidence controls the updates of two user-specific memories, where the \emph{long-term memory} conservatively preserves persistent preferences while the \emph{short-term memory} rapidly adapts to recent interests. For each recommendation, a recent-context representation \emph{dynamically} determines how strongly the two memories contribute to the current user representation. During offline training, next-item prediction is jointly optimized with temporal supervision, while during online adaptation, the shared model remains fixed and only the two user memories are updated from newly observed interactions. Experiments on MovieLens-10M, Amazon Luxury Beauty, and KuaiRec show that \textsf{\textbf{MATE}} improves mean NDCG@10 over the strongest external baseline by 7.0--13.2\%. Further analyses support its ability to adapt to recent interests while retaining useful information about recurring earlier preferences.
\end{abstract}

\keywords{Sequential recommendation, Large language models, User modeling, Long- and short-term interests, Fast-weight memory, Online adaptation}

\maketitle

\section{Introduction}
\label{sec:introduction}

Large language models (LLMs) have recently emerged as a promising paradigm for recommender systems, benefiting from their strong language understanding, semantic modeling, and generalization capabilities~\cite{geng2022p5,li2023recformer,liao2024llara}. Unlike conventional recommendation methods that primarily rely on discrete item IDs and user--item interactions, language-model-enhanced recommenders can exploit rich textual information to construct more informative representations of items and user behaviors. Recent studies have explored different ways of incorporating such knowledge, including formulating recommendation as a language modeling task~\cite{geng2022p5}, learning semantic representations from item content~\cite{li2023recformer}, developing semantic item representations from different domains for LLM-based recommendation~\cite{hou2026unitok}, integrating collaborative signals with LLM representations~\cite{liao2024llara}, and reasoning over historical evidence for preference understanding~\cite{hou2026proverec}. These advances provide a stronger semantic foundation for personalized recommendation; however, richer item representations alone do not necessarily lead to a better understanding of how user interests evolve over time.

%
%
%
%

Accurately understanding users remains challenging because user preferences are inherently dynamic and may exhibit different temporal characteristics. Some preferences can persist over long periods, whereas others emerge from recent behaviors and may rapidly change. Sequential recommendation methods have been developed to capture behavioral dependencies over interaction histories~\cite{kang2018sasrec,li2020tisasrec}, while several studies further distinguish long- and short-term user interests~\cite{yu2019slirec,zheng2022clsr}. Although these approaches have demonstrated the importance of modeling multiple temporal signals, they mainly focus on how historical behaviors are encoded or combined into user representations. A less explored question is how each newly observed interaction should \emph{change} the user model itself. Treating all new behaviors with the same update mechanism may respond too slowly to emerging interests, or conversely, allow a small number of recent interactions to excessively affect useful preferences that have been repeatedly observed over a longer period.

\begin{figure}[t]
  \centering
  \includegraphics[width=\columnwidth]{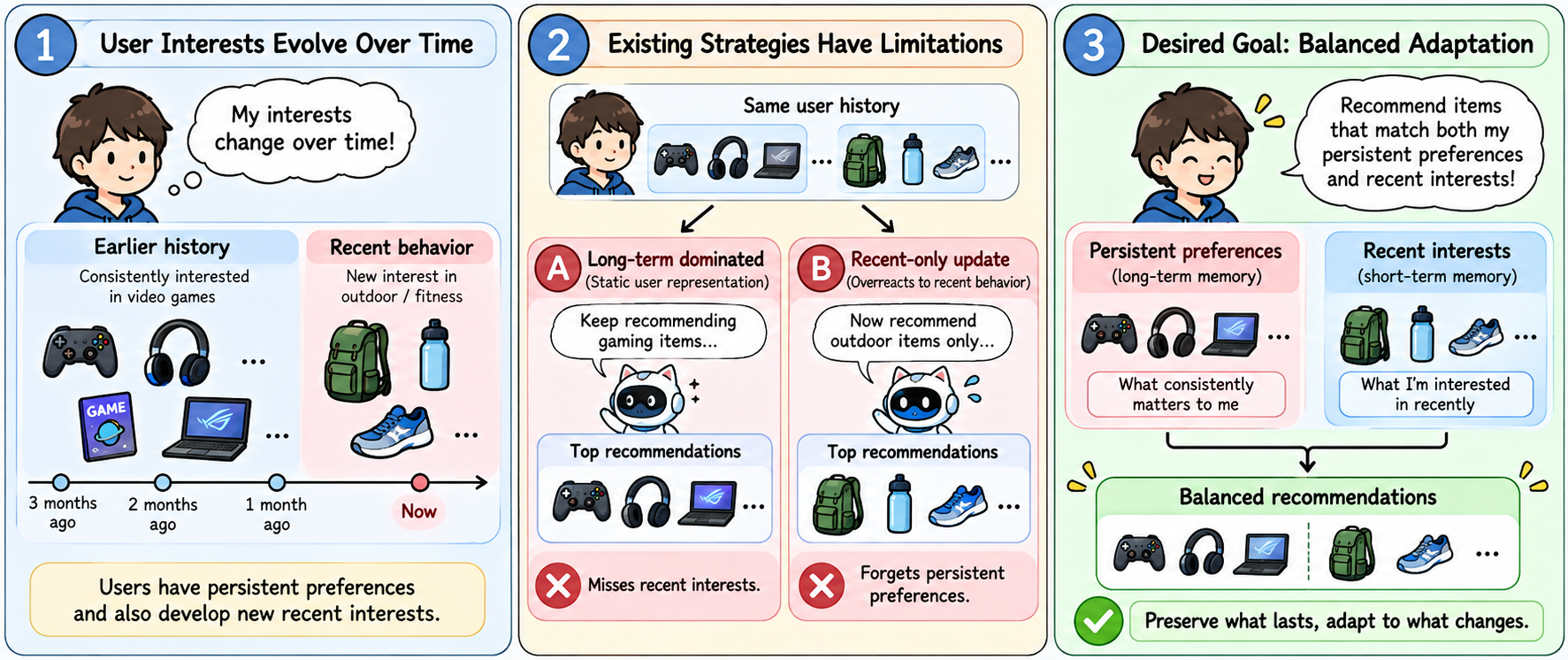}
  \Description{A motivation diagram contrasting persistent user preferences and newly emerging interests, illustrating the need to balance long-term memory with rapid adaptation to recent behavior.}
  \caption{\textbf{Motivation of MATE.} Users exhibit both persistent preferences and newly emerging interests over time. Existing strategies may overemphasize either long-term history or recent behavior, motivating a balanced user model that preserves what lasts while adapting to what changes.}
  \label{fig:motivation}
\end{figure}


%
%
%
%
%
%
%


Interestingly, recent advances in LLM inference provide a useful perspective for addressing this issue. Recent studies have shown that LLM performance can be further improved after offline training by allocating additional computation during inference, for example by searching over candidate reasoning paths, revising intermediate predictions, or adaptively refining the model state~\cite{snell2024scaling,sun2025tttlayers}. The key insight is that inference need not remain a single passive forward process: newly available information can be used to refine the internal state on which subsequent predictions are made. In-Place Test-Time Training (TTT) further demonstrates that such adaptation can be implemented by updating a subset of the model's existing weights using the observed context~\cite{feng2026inplacettt}. This principle naturally aligns with sequential recommendation: as new user interactions become observable, they provide additional evidence for refining the current understanding of the user, which can in turn improve subsequent recommendations. This motivates us to rethink user modeling as a continuous adaptive process rather than solely as a fixed encoding of the observed interaction history.


%
%
%

In this context, a natural question arising is: \textit{``How can we better understand evolving users by preserving their long-term preferences while adapting to recent interests?''} To answer this question, we outline the following two design challenges:
\begin{itemize}
    \item \textbf{C1. Identifying the temporal role of each interaction.} A newly observed interaction does not explicitly indicate whether it reflects a \textbf{persistent preference} or a \textbf{recent interest}. Simply treating earlier interactions as long-term and recent ones as short-term may be unreliable, since the same behavior can reinforce an existing preference while also being part of an active recent interest. Therefore, the model should determine how much evidence each observed interaction provides to the two temporal aspects.
    \item \textbf{C2. Adapting the user state without losing useful historical information.} Even after identifying the temporal role of an interaction, it remains unclear how this evidence should modify the user model. A uniform update may either respond too slowly to newly emerging interests or excessively overwrite useful preferences accumulated over a longer period. Moreover, the stored long-term and short-term information should not necessarily contribute equally to every recommendation.
\end{itemize}

To address these aforementioned challenges, we propose \textsf{\textbf{\underline{MATE}}}, a \textbf{\underline{M}}emory \textbf{\underline{A}}daptation with \textbf{\underline{T}}emporal \textbf{\underline{E}}vidence for better user understanding in LLM-enhanced sequential recommendation. an adaptive user modeling framework for LLM-enhanced sequential recommendation. Our key idea is to use \textbf{temporal evidence} to continuously update two \textbf{ser-specific memories} that capture long-term preferences and short-term interests, while dynamically determining how the two memories should contribute to the current recommendation.


%
%
%
%


\noindent\textbf{(\underline{Idea 1}): Temporal evidence for understanding user behaviors.} Instead of assigning interactions to long-term or short-term interests solely according to their timestamps, we evaluate each newly observed behavior from two complementary perspectives. We regard behaviors that repeatedly appear across separated historical periods as stronger evidence of \textbf{long-term preference}, while consistency with recent interactions provides evidence of \textbf{short-term interest}. Importantly, the two types of evidence are not mutually exclusive, allowing the same interaction to contribute to both user states with different strengths. In this way, the model captures the temporal meaning of each observed behavior before updating the user representation, thereby addressing \textbf{C1}.


%
%
%
%
%
%


\noindent\textbf{(\underline{Idea 2}): Adaptive user memories for continuously refining user understanding.} Inspired by recent adaptive LLM inference methods, we allow newly observed interactions to directly refine the user state instead of only serving as additional inputs to a fixed forward process. Specifically, we maintain a \textbf{long-term memory} and a \textbf{short-term memory} for each user, which follow different update and retention dynamics. The long-term memory changes more conservatively to preserve persistent behavioral patterns, whereas the short-term memory responds more rapidly to recent interests. Furthermore, the recent interaction context dynamically controls how much the two memories contribute to the current recommendation. This design separates \textbf{what should be remembered} from \textbf{what should be emphasized now}, thereby addressing \textbf{C2}.


%
%
%

Overall, our framework follows a simple adaptive process:
newly observed interactions are first evaluated through temporal evidence, then used to update the long-term and short-term user memories, and finally the two memories are selectively utilized for next-item recommendation. The shared recommendation model is learned offline and remains fixed during inference, while the user-specific memories continue to evolve only after real user interactions become observable.







Our main contributions are summarized as follows:

\begin{itemize}
    \item \textbf{Adaptive user understanding:} We connect temporal user modeling with recent advances in adaptive inference and investigate how newly observed interactions can continuously refine user understanding in LLM-enhanced sequential recommendation. Different from methods that primarily encode or combine temporal representations, we focus on how behavioral evidence should \textbf{modify the user state itself}.
    \item \textbf{Evidence-guided dual-memory modeling:} We propose \textbf{\textsf{MATE}}, which uses temporal evidence to control the updates of long-term and short-term user memories with different adaptation dynamics. A context-aware mechanism further determines how strongly the two memories contribute to each recommendation, enabling the model to preserve persistent preferences while adapting to recent changes.
    \item \textbf{Comprehensive evaluations:} Experiments on three real-world datasets show that \textsf{\textbf{MATE}} (a) improves mean NDCG@10 over the strongest external baseline by 7.0--13.2\%, (b) improves prediction of both recent interests and returns to recurring earlier themes, and (c) uses 1~MiB of adaptive memory per user, with 1.25--1.32~ms additional processing per event.
\end{itemize}

\section{Related Work}
\label{sec:related_work}

\subsection{LLM-Enhanced Sequential Recommendation}

Recent advances in language models have enabled recommender systems to exploit rich semantic information beyond conventional item IDs. Early language-based recommendation frameworks, such as P5, formulate recommendation tasks in a unified text-to-text paradigm~\cite{geng2022p5}. Other studies leverage textual item information to learn transferable or semantically enriched representations for sequential recommendation, including UniSRec~\cite{hou2022unisrec} and Recformer~\cite{li2023recformer}. More recent approaches further integrate collaborative signals with language models~\cite{liao2024llara}, employ LLMs for candidate ranking~\cite{yue2023llamarec}, or represent items using semantic identifiers for generative retrieval and scalable item tokenization~\cite{rajput2023tiger,hou2026unitok}. LLMs have also been used to reason over historical evidence for more explicit and verifiable preference modeling~\cite{hou2026proverec}.

These studies demonstrate that language models provide rich semantic representations for modeling items and user--item interactions. In this work, we use frozen language-model embeddings to represent item semantics and instead focus on how such semantic evidence can support a more adaptive understanding of users. Specifically, we study how persistent preferences and recently emerging interests can be accumulated and selectively utilized as the user's interaction history evolves.

\subsection{Temporal User Modeling in Sequential Recommendation}

Sequential recommendation has extensively studied how user preferences evolve over interaction histories. Representative models employ recurrent networks~\cite{hidasi2016gru4rec}, self-attention~\cite{kang2018sasrec}, bidirectional Transformers~\cite{sun2019bert4rec}, and explicit temporal intervals~\cite{li2020tisasrec} to capture sequential dependencies. Beyond modeling a single evolving representation, several studies explicitly distinguish preferences at different temporal scales. SLi-Rec models long- and short-term user preferences and adaptively aggregates them for recommendation~\cite{yu2019slirec}.

Particularly related to our work, CLSR disentangles long- and short-term interests using separate encoders and constructs history-derived proxies to supervise the two representations~\cite{zheng2022clsr}. This demonstrates that temporal interests benefit from explicit modeling rather than being implicitly compressed into a single user state. However, CLSR primarily separates and aggregates \emph{representations}. Our work investigates a different question: whether temporal evidence should also determine \emph{how user states are updated}. Instead of assigning interactions exclusively to one temporal representation, an observed interaction may update both persistent and recent memories with different strengths according to cross-period recurrence and recent behavioral consistency.

\subsection{Adaptive Inference for User Modeling}

Recent inference-time learning methods provide a complementary perspective in which observed context is not only encoded as input, but also used to update an internal model state. Test-Time Training (TTT) introduces self-supervised adaptation during inference~\cite{sun2020ttt}, while TTT layers formulate the hidden state itself as a learnable model that is continuously updated as new tokens are observed~\cite{sun2025tttlayers}. In-Place TTT further implements this idea within standard Transformer MLP blocks by treating existing projection matrices as fast weights and updating them with prediction-aligned local learning objectives~\cite{feng2026inplacettt}. Earlier fast-weight models similarly demonstrate that matrix-valued associative memories can store and retrieve recent contextual information~\cite{ba2016fastweights}.

The idea of adaptive inference has also been explored in recommendation. Related efforts have investigated advanced generative modeling and post-training optimization, including diffusion-based collaborative filtering~\cite{hou2024cfdiff} and reinforcement-learning-based fine-tuning of diffusion recommenders~\cite{hou2025refit}. More directly, TTT4Rec introduces TTT-style state updates into sequential recommendation, allowing newly observed interactions to modify the internal sequence state during inference~\cite{yang2024ttt4rec}. Related work further studies adaptation under distribution shift~\cite{yang2025dt3or}, temporal alignment for changing user interests~\cite{zhang2025t2arec}, and retrieval-based adaptation without parameter updates~\cite{tang2026read}.



Existing temporal recommenders model user interests at different time scales, while adaptive inference methods update user states from newly observed interactions. However, how temporal evidence should guide such adaptation remains underexplored. We therefore maintain two user-specific fast-weight memories for persistent and recent interests, using cross-period recurrence and recent-window consistency to control memory updates, while the current context determines their contributions to the next recommendation.

\section{Methodology}
\label{sec:methodology}


\subsection{Task Formulation}
\label{sec:task_formulation}

Let $\mathcal{U}$ and $\mathcal{I}$ denote the sets of users and items. Each item $i\in\mathcal{I}$ is associated with textual metadata
\begin{equation}
\mathbf{x}_i =
\{\text{title}_i,\text{categories}_i,\text{features}_i,
\text{description}_i\},
\end{equation}
which is encoded by a frozen language-model encoder $g(\cdot)$ to obtain a semantic item representation
\begin{equation}
\mathbf{e}_i=g(\mathbf{x}_i)\in\mathbb{R}^{d}.
\end{equation}

For each user $u$, we observe a chronological interaction sequence
\begin{equation}
\mathcal{S}_u=
{(i_{1},t_{1},r_{1}),\ldots,
(i_{T},t_{T},r_{T})},
\end{equation}
where $i_{t}$, $t_{t}$, and $r_{t}$ denote the interacted item, timestamp, and observed feedback, respectively. Given the observed history
\begin{equation}
\mathcal{H}_{u,t}={(i_1,t_1,r_1),\ldots,(i_t,t_t,r_t)},
\end{equation}
our goal is to predict the next item $i_{t+1}$.

Different from conventional sequential recommendation that derives the user state mainly from the current interaction sequence, we maintain two user-specific adaptive memories, long-term memory $\mathbf{M}^{L}_{u,t}$ and short-term memory $\mathbf{M}^{S}_{u,t}$, which are designed to capture persistent preferences and recent interests, respectively. Given $\mathcal{H}_{u,t}$ and the two memories, the recommender produces a user representation 
\begin{equation}
\mathbf{h}_{u,t}
=
f_{\theta}
\left(
\mathcal{H}_{u,t},
\mathbf{M}^{L}_{u,t},
\mathbf{M}^{S}_{u,t}
\right),
\end{equation}
and scores a candidate item $i$ by
\begin{equation}
s_{u,t}(i)=\mathbf{h}_{u,t}^{\top}\mathbf{e}_i .
\end{equation}

The shared parameters $\theta$ are learned offline, whereas the user-specific memories evolve sequentially from observed interactions. The recommendation objective is to predict the next item based only on the observed history and the current user memories. Formally,
\begin{equation}
\hat{i}_{u,t}
=
\arg\max_{i\in\mathcal{I}}
\left(
s_{u,t}(i) \mid
\mathcal{H}_{u,t},
\mathbf{M}^{L}_{u,t},
\mathbf{M}^{S}_{u,t}
\right).
\end{equation}

After the true interaction $i_{u,t}$ is observed, it is used to update the user memories for future recommendation:
\begin{equation}
\left(
\mathbf{M}^{L}_{u,t+1},
\mathbf{M}^{S}_{u,t+1}
\right)
=
\mathcal{A}
\left(
\mathbf{M}^{L}_{u,t},
\mathbf{M}^{S}_{u,t},
\mathcal{H}_{u,t},
i_{u,t}
\right).
\end{equation}
where $\mathcal{A}(\cdot)$ denotes the user-memory adaptation process. Therefore, $i_{u,t}$ is used to update the memories only after its prediction has been evaluated, preventing future information from leaking into the current recommendation\footnote{The detailed memory writing, reading, and learning mechanisms are introduced in the following sections.}.

\subsection{Overview}
\label{sec:method_overview}

Recent studies on LLM inference have shown that model performance can be further improved after pretraining by making better use of computation and information available at inference time. For example, inference-time scaling improves reasoning by allocating additional computation to explore, evaluate, or refine candidate solutions~\cite{snell2025scaling}, while test-time learning further allows the model state itself to adapt to the observed context~\cite{sun2025tttlayers}. These approaches improve prediction not by simply enlarging the pretrained model, but by extracting more useful information from the current input and using it to guide subsequent inference.

\begin{figure*}[t]
  \centering
  \includegraphics[width=\textwidth]{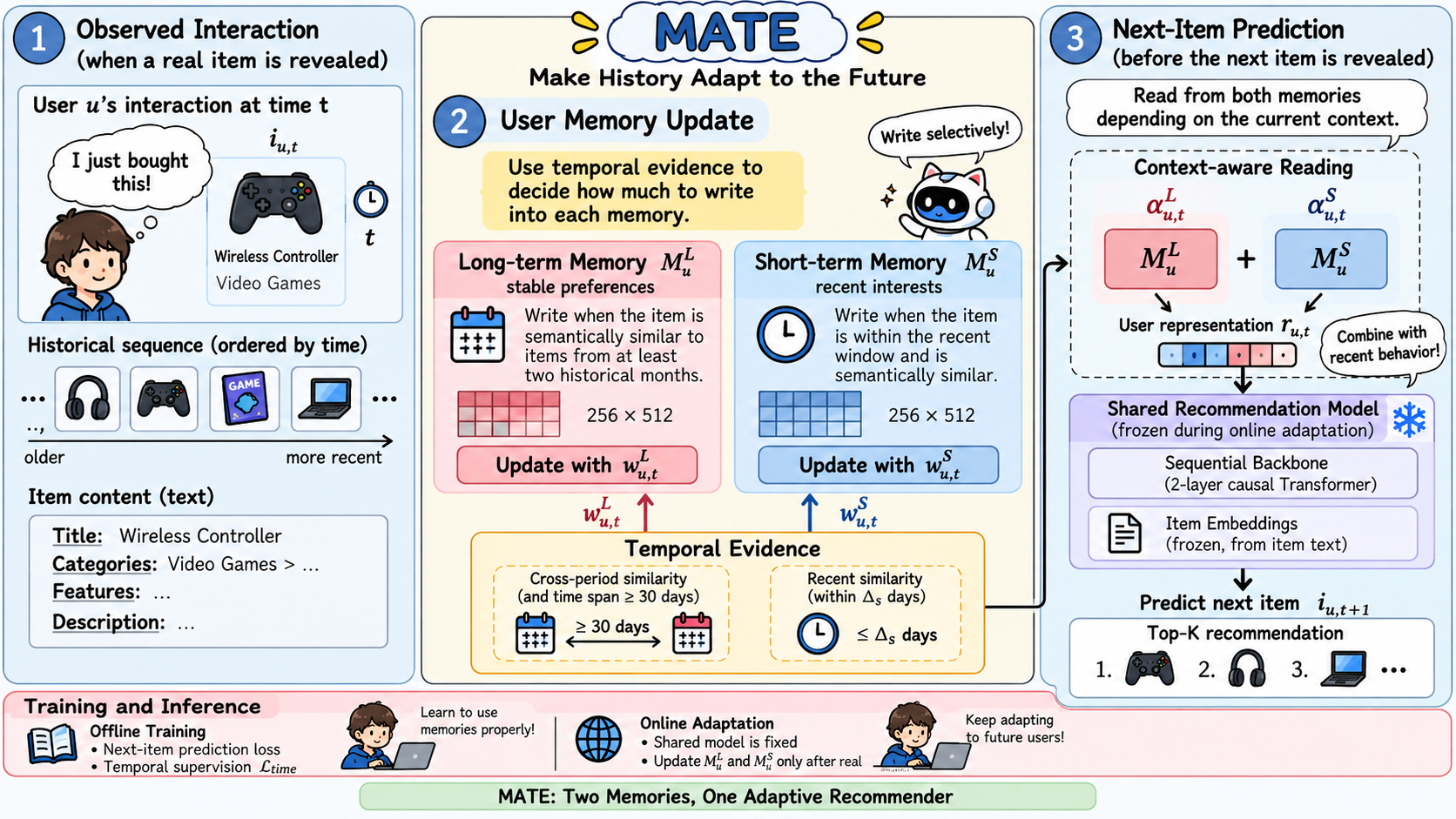}
  \Description{Overview of the MATE framework, showing temporal evidence computation, long- and short-term user-memory updates, context-conditioned memory reading, and next-item prediction.}
  \caption{\textbf{Overview of \textsf{\textbf{MATE}}.} A newly observed interaction is first evaluated by temporal evidence, which determines how strongly it updates the user’s long-term memory and short-term memory. For the next-item prediction, the model reads from both memories according to the current context, enabling MATE to preserve persistent preferences while adapting to recent interests.}
  \label{fig:overview}
\end{figure*}

Motivated by this idea, we propose a three-stage adaptive user modeling framework, illustrated in Fig.~\ref{fig:overview}. First, \textbf{Temporal Evidence} examines whether a newly observed behavior is repeatedly supported by the user's earlier history or is mainly consistent with recent interactions, providing evidence for long-term preference and short-term interest, respectively. Second, \textbf{Adaptive User Memory} uses these signals to update two user-specific memories: a long-term memory that preserves relatively stable preferences and a short-term memory that quickly adapts to recent interests. The shared recommendation model remains fixed during inference, while the two user memories evolve as new interactions are observed. Third, \textbf{Context-Aware Recommendation} determines how strongly the two memories should contribute under the current context and produces the user representation for next-item ranking. Frozen language-model embeddings provide semantic item representations throughout the framework. During offline training, the shared recommender and memory-control mechanisms are jointly learned, while during inference the shared parameters remain fixed and only the user-specific memories evolve after real interactions are observed.

\subsection{Temporal Evidence}
\label{sec:temporal_evidence}

As illustrated in Fig.~\ref{fig:overview}, the first stage determines how much a newly observed interaction reflects the user's long-term preference and short-term interest. Instead of dividing the interaction history into two fixed subsets, we evaluate each observed interaction using two independent temporal signals. This allows the same interaction to contribute to both user states when supported by the corresponding evidence.

Let $i_{u,t}$ denote the newly observed item for user $u$ at time $\tau_{u,t}$, and let $\mathbf{e}_{u,t}$ be its normalized semantic embedding. Since all item embeddings are $\ell_2$-normalized, the semantic similarity between two interacted items is computed as
\begin{equation}
\operatorname{sim}(i_{u,t},i_{u,j})
=
\mathbf{e}_{u,t}^{\top}\mathbf{e}_{u,j}.
\end{equation}
We regard two interactions as semantically related when
\begin{equation}
\operatorname{sim}(i_{u,t},i_{u,j}) \geq \delta,
\end{equation}
where $\delta$ is a similarity threshold.

\paragraph{Short-term evidence.}
Short-term evidence measures whether the current interaction is consistent with the user's recent behaviors. Let
\begin{equation}
\mathcal{R}_{u,t}
=
\{
j<t \mid
\tau_{u,t}-\tau_{u,j}\leq \Delta_S
\}
\end{equation}
denote the recent interaction window, where $\Delta_S$ controls the short-term time span. We compute the proportion of semantically related interactions within this window as
\begin{equation}
C^S_{u,t}
=
\frac{
\sum_{j\in\mathcal{R}_{u,t}}
\mathbb{I}
\left[
\mathbf{e}_{u,t}^{\top}\mathbf{e}_{u,j}
\geq \delta
\right]
}{
|\mathcal{R}_{u,t}|
}.
\end{equation}
The resulting short-term evidence is
\begin{equation}
w^S_{u,t}
=
\epsilon
+
(1-\epsilon)C^S_{u,t},
\end{equation}
where $\epsilon>0$ provides a minimum write strength for newly emerging interests. Therefore, an interaction can still enter the short-term state even when no similar recent behavior has yet been observed, while repeated recent interactions produce stronger evidence.

\paragraph{Long-term evidence.}
Long-term evidence focuses on behavioral patterns that repeatedly appear across different periods rather than within a single burst of activity. We first exclude the recent window and group the remaining historical interactions by calendar month. Let $\mathcal{M}_{u,t}$ denote the set of non-empty historical months before the recent window, and let $\mathcal{B}^{m}_{u,t}$ denote the interactions belonging to month $m\in\mathcal{M}_{u,t}$.

For each historical month, we define whether it contains at least one interaction semantically related to the current item:
\begin{equation}
z^{m}_{u,t}
=
\mathbb{I}
\left[
\max_{j\in\mathcal{B}^{m}_{u,t}}
\mathbf{e}_{u,t}^{\top}\mathbf{e}_{u,j}
\geq \delta
\right].
\end{equation}
The number of supporting months is then
\begin{equation}
N^{L}_{u,t}
=
\sum_{m\in\mathcal{M}_{u,t}} z^{m}_{u,t}.
\end{equation}

We assign non-zero long-term evidence only when the related behavior appears in at least $K_L$ distinct historical periods and spans a sufficient temporal range $\Delta_L$. Accordingly,
\begin{equation}
w^L_{u,t}
=
\begin{cases}
\displaystyle
\frac{N^{L}_{u,t}}
{|\mathcal{M}_{u,t}|},
&
N^{L}_{u,t}\geq K_L
\text{ and }
\operatorname{span}_{u,t}\geq\Delta_L,\\
0,
& \text{otherwise},
\end{cases}
\end{equation}
where $\operatorname{span}_{u,t}$ denotes the time interval between the current interaction and its earliest supporting historical interaction.

This design prevents a dense burst of similar behaviors within a single period from being immediately treated as a persistent preference. In contrast, repeatedly observed semantic patterns across separated periods provide stronger evidence that the behavior reflects a relatively stable user preference\footnote{In our implementation, we use $\Delta_S=7$ days, $K_L=2$, $\Delta_L=30$ days, $\delta=0.7$, and $\epsilon=0.2$ by default, while treating them as configurable hyperparameters.}.

The two evidence scores $w^L_{u,t}$ and $w^S_{u,t}$ are computed independently and are not constrained to sum to one. Thus, an interaction may simultaneously reinforce an existing long-term preference and reflect an active recent interest. Importantly, both scores are computed exclusively from interactions observed before or at the current step; no future interaction is used. The resulting evidence weights are passed to the adaptive user-memory module described in the next section.

\subsection{Adaptive User Memory}
\label{sec:adaptive_memory}


As illustrated in the second stage of Fig.~\ref{fig:overview}, the temporal evidence obtained in the previous section is used to update two user-specific memories: a long-term memory $\mathbf{M}^{L}_{u,t}$ for relatively persistent preferences and a short-term memory $\mathbf{M}^{S}_{u,t}$ for recently emerging interests. The two memories are maintained separately for each user and evolve only from observed interactions, while the shared recommendation model is common to all users.

For an observed interaction $i_{u,t}$, let $\mathbf{k}_{u,t-1}$ denote the internal representation produced by the final feed-forward layer before the current interaction is revealed,\footnote{$\mathbf{h}_{u,t}$ denotes the final user representation used for item scoring, while $\mathbf{k}_{u,t-1}$ is the intermediate feed-forward representation used for memory writing. In our implementation, $\mathbf{k}_{u,t-1}\in\mathbb{R}^{512}$ and $\mathbf{h}_{u,t}\in\mathbb{R}^{256}$.} and let $\mathbf{e}_{u,t}$ denote the semantic embedding of the newly observed item. The memory update associates the preceding user context $\mathbf{k}_{u,t-1}$ with the subsequently observed item $\mathbf{e}_{u,t}$.

For the long-term memory, we define
\begin{equation}
\ell^{L}_{u,t}
=
-
w^{L}_{u,t}
\mathbf{e}_{u,t}^{\top}
\mathbf{M}^{L}_{u,t-1}
\mathbf{k}_{u,t-1}
+
\frac{\gamma_L}{2}
\left|
\mathbf{M}^{L}_{u,t-1}
\right|_{F}^{2},
\end{equation}
where $w^{L}_{u,t}$ is the long-term evidence introduced in Section~\ref{sec:temporal_evidence}, and $\gamma_L$ controls the retention of previously stored long-term information.

Similarly, the short-term memory objective is
\begin{equation}
\ell^{S}_{u,t}
=
-
w^{S}_{u,t}
\mathbf{e}_{u,t}^{\top}
\mathbf{M}^{S}_{u,t-1}
\mathbf{k}_{u,t-1}
+
\frac{\gamma_S}{2}
\left|
\mathbf{M}^{S}_{u,t-1}
\right|_{F}^{2},
\end{equation}
where $w^{S}_{u,t}$ denotes the short-term evidence and $\gamma_S$ controls the retention of short-term information.

Applying one local gradient step with update rates $\eta_L$ and $\eta_S$ gives the long-term memory update
\begin{equation}
\mathbf{M}^{L}_{u,t}
=
(1-\eta_L\gamma_L)
\mathbf{M}^{L}_{u,t-1}
+
\eta_L
w^{L}_{u,t}
\mathbf{e}_{u,t}
\mathbf{k}_{u,t-1}^{\top},
\end{equation}
and the short-term memory update
\begin{equation}
\mathbf{M}^{S}_{u,t}
=
(1-\eta_S\gamma_S)
\mathbf{M}^{S}_{u,t-1}
+
\eta_S
w^{S}_{u,t}
\mathbf{e}_{u,t}
\mathbf{k}_{u,t-1}^{\top}.
\end{equation}

Each update contains two components. The first preserves previously accumulated information, while the second writes the newly observed interaction according to its corresponding temporal evidence. Therefore, interactions supported by repeated historical patterns produce stronger updates to the long-term memory, whereas interactions that are consistent with recent behaviors contribute more strongly to the short-term memory. Since $w^{L}_{u,t}$ and $w^{S}_{u,t}$ are computed independently, the same interaction may update both memories when it simultaneously reflects an established preference and an active recent interest.

The two memories also use different update dynamics. The long-term memory is updated more conservatively and retains previous information for longer, whereas the short-term memory reacts more quickly to newly observed behaviors and discounts older information more strongly. Thus, temporal evidence determines \emph{what} should be written, whereas the memory-specific update dynamics determine \emph{how quickly} the user state should change.

Importantly, $\mathbf{M}^{L}_{u,t}$ and $\mathbf{M}^{S}_{u,t}$ are user-specific adaptive memories rather than globally shared model parameters. During offline training, the shared parameters are optimized through the recommendation objective, while the two memories are generated sequentially by the above update rules. During inference, the shared parameters remain fixed, and only the user memories continue to evolve after real interactions are observed. The updated memories are then used for subsequent recommendations, as described in the next section.

\subsection{Context-Aware Recommendation}
\label{sec:context_aware}


After updating the long-term and short-term memories, the final stage in Fig.~\ref{fig:overview} determines how much information from each memory should contribute to the current recommendation. Although both memories summarize observed user behaviors, their relevance may vary across prediction steps. We therefore use the recent interaction context to dynamically control their contributions rather than combining them with fixed weights.

Given the observed history $\mathcal{H}_{u,t}$, we construct a recent-context representation $\mathbf{q}_{u,t}$ from the item embeddings within the same recent window $\Delta_S$ introduced in Section~\ref{sec:temporal_evidence}. Specifically,
\begin{equation}
\begin{aligned}
\mathbf{q}_{u,t}
&=
\sum_{j\in\mathcal{R}_{u,t}\cup\{t\}}
a_{u,t,j}\mathbf{e}_{u,j},\\
a_{u,t,j}
&=
\frac{
\exp\!\left[-\lambda(\tau_{u,t}-\tau_{u,j})\right]
}{
\sum_{k\in\mathcal{R}_{u,t}\cup\{t\}}
\exp\!\left[-\lambda(\tau_{u,t}-\tau_{u,k})\right]
}.
\end{aligned}
\end{equation}
where $\lambda$ controls the temporal decay, such that more recent interactions contribute more strongly to the current context.

Based on $\mathbf{q}_{u,t}$, a small gating network produces two independent memory-reading weights,
\begin{equation}
\left[
\alpha^{L}_{u,t},
\alpha^{S}_{u,t}
\right]
=
\sigma \left(
\operatorname{MLP}(\mathbf{q}_{u,t})
\right),
\end{equation}
where $\sigma(\cdot)$ denotes the sigmoid function. The two weights are independently estimated and are therefore not constrained to sum to one. This allows the model to emphasize either memory, both memories, or reduce their contributions when they are less relevant to the current context.

The long-term and short-term memories are incorporated into the final feed-forward transformation as
\begin{equation}
\mathbf{h}_{u,t}
=
\operatorname{Norm}
\left(
\mathbf{W}_{\mathrm{down}}\mathbf{k}_{u,t}
+
\alpha^{L}_{u,t}
\mathbf{M}^{L}_{u,t}\mathbf{k}_{u,t}
+
\alpha^{S}_{u,t}
\mathbf{M}^{S}_{u,t}\mathbf{k}_{u,t}
\right),
\end{equation}
where $\mathbf{W}_{\mathrm{down}}$ belongs to the shared model parameters $\theta$, while $\mathbf{M}^{L}_{u,t}$ and $\mathbf{M}^{S}_{u,t}$ are the user-specific memories updated in Section~\ref{sec:adaptive_memory}. $\operatorname{Norm}(\cdot)$ denotes the normalization used to obtain the final user representation $\mathbf{h}_{u,t}$.

The resulting user representation is matched with the frozen semantic embedding of each candidate item:
\begin{equation}
s_{u,t}(i)
=
\mathbf{h}_{u,t}^{\top}\mathbf{e}_{i},
\end{equation}
and candidate items are ranked according to $s_{u,t}(i)$.

It is important to distinguish memory writing from memory reading. The temporal evidence $w^{L}_{u,t}$ and $w^{S}_{u,t}$ determines how strongly an observed interaction updates the two memories, whereas $\alpha^{L}_{u,t}$ and $\alpha^{S}_{u,t}$ determine how strongly the stored long-term and short-term information is used for the current recommendation. Therefore, information can remain in a user memory without necessarily contributing equally to every subsequent prediction.

\subsection{Joint Learning and Online Adaptation}
\label{sec:learning}


The shared model parameters are learned offline, while the long-term and short-term memories are updated sequentially using the observed interactions of each user. During training, the model follows the same causal process as in inference: it first predicts the next item using the current memories, and only after the true interaction is revealed are the memories updated according to Section~\ref{sec:adaptive_memory}.

For next-item prediction, given the user representation $\mathbf{h}_{u,t}$, the recommendation loss is defined as
\begin{equation}
\mathcal{L}_{\mathrm{rec}}
=
-
\sum_{u,t}
\log
\frac{
\exp \left(s_{u,t}(i_{u,t+1})\right)
}{
\sum_{i\in\mathcal{C}_{u,t}}
\exp \left(s_{u,t}(i)\right)
},
\end{equation}
where $\mathcal{C}_{u,t}$ denotes the candidate set and $i_{u,t+1}$ is the next observed item.

To encourage the two memories to preserve information consistent with their temporal roles, we additionally construct a long-term target $\mathbf{p}^{L}_{u,t}$ and a short-term target $\mathbf{p}^{S}_{u,t}$ from the observed history. The former summarizes repeatedly supported historical behaviors, while the latter summarizes recent interactions. Let $\hat{\mathbf{p}}^{L}_{u,t}$ and $\hat{\mathbf{p}}^{S}_{u,t}$ denote the corresponding representations read only from $\mathbf{M}^{L}_{u,t}$ and $\mathbf{M}^{S}_{u,t}$. We define
\begin{equation}
\mathcal{L}_{\mathrm{time}}
=
\left|
\hat{\mathbf{p}}^{L}_{u,t}
-
\mathbf{p}^{L}_{u,t}
\right|_2^2
+
\left|
\hat{\mathbf{p}}^{S}_{u,t}
-
\mathbf{p}^{S}_{u,t}
\right|_2^2,
\end{equation}
where a term is omitted when the corresponding temporal target cannot be reliably constructed.

The overall offline training objective is
\begin{equation}
\mathcal{L}
=
\mathcal{L}_{\mathrm{rec}}
+
\lambda\mathcal{L}_{\mathrm{time}},
\end{equation}
where $\lambda$ controls the temporal supervision. The shared parameters, including the sequential recommender and the memory-reading components, are optimized with this objective. The user memories themselves are not globally shared trainable parameters; instead, they are generated and updated sequentially through the memory-update equations in Section~\ref{sec:adaptive_memory}.

During online adaptation, all shared parameters are fixed. For each user, $\mathbf{M}^{L}_{u,t}$ and $\mathbf{M}^{S}_{u,t}$ continue to evolve whenever a new real interaction is observed. Thus, newly observed behaviors can influence subsequent recommendations without modifying the shared recommendation model, while future interactions are never used to update the memories before they are predicted.

\subsection{Theoretical Analysis}
\label{sec:theoretical_analysis}

We analyze two properties of \textsf{MATE}: how temporal evidence restricts
sporadic long-term writes, and how an individual memory update
influences subsequent user states. The results characterize the
memory mechanism under explicit assumptions, rather than guarantee
improvements in recommendation accuracy. Proofs are provided in
Appendix~\ref{app:mate_theory}.

\begin{theorem}[Selective long-term writing]
\label{thm:selective_writing}
Fix a user and a newly observed item. Let
$n=|\mathcal{M}_{u,t}|\geq 2$ and $K_L=2$.
Conditional on this item and the available historical month bins,
suppose that the monthly support indicators $z^m_{u,t}$ are
independent, with $\Pr(z^m_{u,t}=1)\leq p$ for every month,
where $p\in[0,1]$. Then
\begin{equation}
    \mathbb{E}[w^L_{u,t}]
    \leq
    p\bigl[1-(1-p)^{n-1}\bigr]
    \leq
    (n-1)p^2.
    \label{eq:long_write_bound}
\end{equation}
\end{theorem}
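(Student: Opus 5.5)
We begin by dropping the span condition, which can only shrink $w^L_{u,t}$. With $K_L=2$, the definition gives the pointwise bound
\[
w^L_{u,t}\le \frac{N}{n}\,\mathbb{I}[N\ge 2],\qquad N=N^L_{u,t}=\sum_{m\in\mathcal{M}_{u,t}} z^m_{u,t}.
\]
Taking expectations, we write $N\,\mathbb{I}[N\ge 2]=\sum_m z^m\,\mathbb{I}[N\ge 2]$. On the event $z^m=1$, the condition $N\ge 2$ says exactly that at least one other indicator $z^{m'}$, with $m'\ne m$, equals one. This gives
\[
\mathbb{E}\bigl[w^L_{u,t}\bigr]\le \frac1n\sum_{m}\Pr\bigl(z^m=1\bigr)\,\Pr\Bigl(\textstyle\sum_{m'\ne m} z^{m'}\ge 1\Bigr),
\]
where the factorization uses the assumed independence of $z^m$ from the remaining indicators.

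Next we bound each factor. The first satisfies $\Pr(z^m=1)\le p$. For the second, independence gives
\[
\Pr\Bigl(\textstyle\sum_{m'\ne m} z^{m'}\ge 1\Bigr)=1-\prod_{m'\ne m}\bigl(1-\Pr(z^{m'}=1)\bigr).
\]
This expression is increasing in each probability, so it is at most $1-(1-p)^{n-1}$. Each of the $n$ summands is therefore at most $p[1-(1-p)^{n-1}]$. The prefactor $1/n$ cancels the $n$ terms, which yields the first inequality in \eqref{eq:long_write_bound}.

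For the second inequality, Bernoulli's inequality gives $(1-p)^{n-1}\ge 1-(n-1)p$ for $p\in[0,1]$ and $n\ge 2$. Hence $1-(1-p)^{n-1}\le (n-1)p$, and multiplying by $p$ gives $(n-1)p^2$.

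The only delicate point is the monotonicity step, since the probabilities may differ across months. It is handled by noting that $1-\prod(1-q_{m'})$ is coordinatewise nondecreasing in each $q_{m'}\in[0,p]$. No other obstacle arises, because conditioning on the item and the month bins fixes $n$ as deterministic.
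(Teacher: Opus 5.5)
Your proof is correct, but it takes a different route from the paper after the shared first step. Both arguments begin with the pointwise reduction $w^L_{u,t}\le (N/n)\,\mathbb{I}[N\ge 2]$ and end with Bernoulli's inequality.

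The paper handles the heterogeneous probabilities by a coupling. It uses independent uniforms to show that $N$ is stochastically dominated by $\widetilde N\sim\operatorname{Binomial}(n,p)$. Since $x\mapsto (x/n)\mathbb{I}[x\ge 2]$ is nondecreasing, it then evaluates the Binomial expectation exactly as $(\mathbb{E}[\widetilde N]-\Pr(\widetilde N=1))/n$.

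You never leave the heterogeneous setting. Write $q_m=\Pr(z^m_{u,t}=1)$. Your identity $z^m\mathbb{I}[N\ge2]=z^m\mathbb{I}[\sum_{m'\ne m}z^{m'}\ge1]$, together with independence, gives the exact value $\mathbb{E}[N\,\mathbb{I}[N\ge 2]]=\sum_m q_m\bigl[1-\prod_{m'\ne m}(1-q_{m'})\bigr]$. This is precisely $\mathbb{E}[N]-\Pr(N=1)$ for non-identical $q_m$. You then bound it term by term using the coordinatewise monotonicity of $1-\prod(1-q_{m'})$.

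Your route is more elementary and self-contained: it needs only linearity of expectation and a product formula, with no stochastic-dominance lemma. The paper's route is more modular: the dominance step transfers unchanged to any nondecreasing function of $N$, such as a larger threshold $K_L$ or a different write rule. Your decomposition would have to be re-derived for each such change.
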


For a fixed number of historical months, the bound is quadratic
in $p$ when individual monthly matches are rare. Thus, requiring
repeated support limits the expected strength of sporadic
long-term writes. The additional time-span requirement can only
reduce this strength. Meanwhile, $w^S_{u,t}\geq\epsilon$ allows
new behaviors to contribute to short-term memory without
established long-term support.

\begin{theorem}[Stability of memory adaptation]
\label{thm:memory_stability}
Let $\rho_L=1-\eta_L\gamma_L$ and
$\rho_S=1-\eta_S\gamma_S$, with
$0<\rho_S<\rho_L<1$.
Assume radial Frobenius-norm clipping of each write increment
at $c>0$ and each updated memory at $B>0$.

Compare MATE with a reference trajectory
$\widetilde{\mathbf{M}}^L,\widetilde{\mathbf{M}}^S$
that omits only the additive write for $i_{u,t}$, retaining
the decay at that step. Both trajectories have identical memories
before $t$ and use the same observed history, context representations,
and write evidence thereafter. For any integer $s\geq0$,
\begin{equation}
\begin{aligned}
\|\mathbf{M}^L_{u,t+s}
-\widetilde{\mathbf{M}}^L_{u,t+s}\|_F
&\leq
\rho_L^s
\min\{c,\eta_Lw^L_{u,t}
\|\mathbf{k}_{u,t-1}\|_2\}.
\end{aligned}
\label{eq:long_influence_bound}
\end{equation}
\begin{equation}
\begin{aligned}
\|\mathbf{M}^S_{u,t+s}
-\widetilde{\mathbf{M}}^S_{u,t+s}\|_F
&\leq
\rho_S^s
\min\{c,\eta_Sw^S_{u,t}
\|\mathbf{k}_{u,t-1}\|_2\}.
\end{aligned}
\label{eq:short_influence_bound}
\end{equation}
\end{theorem}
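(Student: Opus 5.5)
The plan is to show that the two trajectories differ at step $t$ by at most the clipped write increment, and that this difference then contracts by the decay factor at every later step.

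\textbf{Clipped update map.} Write the clipped update at step $r$ as
\[
\mathbf{M}_{r}=\Pi_B\bigl(\rho\,\mathbf{M}_{r-1}+\Pi_c(\Delta_r)\bigr),
\qquad
\Delta_r=\eta\, w_r\,\mathbf{e}_{u,r}\mathbf{k}_{u,r-1}^{\top}.
\]
Here $\Pi_a(\mathbf{X})=\mathbf{X}\min\{1,a/\|\mathbf{X}\|_F\}$ is radial clipping, i.e. Euclidean projection onto the Frobenius ball of radius $a$. The two ingredients are:
\begin{itemize}
\item $\Pi_a$ is a projection onto a closed convex set, so it is $1$-Lipschitz in $\|\cdot\|_F$.
\item $\|\Pi_c(\Delta_t)\|_F\leq\min\{c,\|\Delta_t\|_F\}$, and since $\|\mathbf{e}_{u,t}\|_2=1$ we have $\|\Delta_t\|_F=\eta\, w_t\|\mathbf{e}_{u,t}\|_2\|\mathbf{k}_{u,t-1}\|_2=\eta\, w_t\|\mathbf{k}_{u,t-1}\|_2$.
\end{itemize}
The argument is identical for $L$ and $S$, so I treat a generic memory with $(\rho,\eta,w)$.

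\textbf{Base case $s=0$.} Both trajectories start from the same $\mathbf{M}_{t-1}$. MATE gives $\Pi_B(\rho\mathbf{M}_{t-1}+\Pi_c(\Delta_t))$, and the reference gives $\Pi_B(\rho\mathbf{M}_{t-1})$. By nonexpansiveness of $\Pi_B$, their difference is at most $\|\Pi_c(\Delta_t)\|_F\leq\min\{c,\eta w_t\|\mathbf{k}_{u,t-1}\|_2\}$. This bound also covers the reading that the clipping at $B$ is skipped when the norm is already below $B$, since $\Pi_B$ is then the identity.

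\textbf{Inductive step $r>t$.} By hypothesis both trajectories use the same evidence $w_r$, the same item $\mathbf{e}_{u,r}$, and the same key $\mathbf{k}_{u,r-1}$. The keys are part of the "same context representations" assumption, so I take them as shared inputs rather than recomputing them from the differing memories. The increments $\Pi_c(\Delta_r)$ are therefore identical and cancel:
\[
\|\mathbf{M}_r-\widetilde{\mathbf{M}}_r\|_F\leq\|\rho\mathbf{M}_{r-1}+\Pi_c(\Delta_r)-\rho\widetilde{\mathbf{M}}_{r-1}-\Pi_c(\Delta_r)\|_F=\rho\|\mathbf{M}_{r-1}-\widetilde{\mathbf{M}}_{r-1}\|_F.
\]
Iterating $s$ times yields the factor $\rho^s$. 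Substituting $(\rho_L,\eta_L,w^L_{u,t})$ and $(\rho_S,\eta_S,w^S_{u,t})$ gives \eqref{eq:long_influence_bound} and \eqref{eq:short_influence_bound}. The hypothesis $0<\rho_S<\rho_L<1$ is only needed for interpretation (the short-term influence decays faster); the bounds themselves use only $0<\rho<1$, which makes $\rho\,\cdot$ a contraction.

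\textbf{Main obstacle.} The delicate point is the coupling assumption. If $\mathbf{k}_{u,r-1}$ depended on the memories, the increments would no longer cancel and we would need a Lipschitz bound on the network. I would resolve this by stating explicitly that the comparison is made with the inputs $(w_r,\mathbf{e}_{u,r},\mathbf{k}_{u,r-1})$ held fixed, as the theorem stipulates. A secondary check is that the clipping operators are exactly the radial, norm-based ones, so that both the $1$-Lipschitz property and the bound $\|\Pi_c(\mathbf{X})\|_F\leq\min\{c,\|\mathbf{X}\|_F\}$ hold.
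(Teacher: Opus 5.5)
Your proposal is correct and follows essentially the same route as the paper. Both use nonexpansiveness of radial clipping, viewed as projection onto a Frobenius ball. Both bound the step-$t$ gap by the clipped increment, $\|\Pi_c(\Delta_t)\|_F\le\min\{c,\eta w_t\|\mathbf{k}_{u,t-1}\|_2\}$, and then cancel the identical later increments inside $\Pi_B$ to obtain the factor $\rho$ per step. The only cosmetic difference is that the paper derives nonexpansiveness explicitly from projection optimality and Cauchy--Schwarz, whereas you cite it as a standard property of projections onto closed convex sets.
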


The influence of an individual write cannot grow under identical
subsequent inputs. Temporal evidence controls its initial magnitude,
while the memory-specific retention factors govern its subsequent
influence bound. In particular, $w^L_{u,t}=0$ introduces no new
association into long-term memory, while the short-term branch
can respond to the same event. Here, $s$ counts memory updates,
not elapsed days.

\section{Experiments}
\label{sec:experiments}

We evaluate \textsf{MATE} from the perspectives of recommendation accuracy, component effectiveness, user adaptation, computational cost, and individual recommendation behavior. The experiments address the following research questions (\textbf{RQs}):
\begin{itemize}
    \item \textbf{RQ1:} \textbf{Overall Effectiveness}. How does \textsf{MATE} perform compared with representative sequential recommendation competitors?
    \item \textbf{RQ2:} \textbf{Component Effectiveness}. How do temporal evidence, dual user memories, temporal supervision, and context-aware reading affect performance?
    \item \textbf{RQ3:} \textbf{User Adaptation.} Does \textsf{MATE} retain useful long-term information while adapting to recent interests?
    \item \textbf{RQ4:} \textbf{Efficiency Analysis.} What computational and storage costs accompany user-memory adaptation?
    \item \textbf{RQ5:} \textbf{Qualitative User Understanding.} How do memory updates and reading affect recommendations for individual users?
\end{itemize}

\subsection{Experimental Setup}
\label{sec:exp_setup}


\paragraph{Datasets and preprocessing}
We use MovieLens-10M\footnote{\url{https://grouplens.org/datasets/movielens/10m/}}, Amazon Luxury Beauty 2018\footnote{\url{https://cseweb.ucsd.edu/~jmcauley/datasets/amazon_v2/}}, and KuaiRec 2.0\footnote{\url{https://zenodo.org/records/18164998}}. Positive interactions are ratings of at least four for MovieLens and Luxury Beauty, and a watch ratio greater than two for KuaiRec. We retain the first positive event for each user--item pair and remove events without item metadata. Item text contains movie titles and genres; product titles, categories, features, and descriptions; or video captions, categories, cover text, and topic tags. Review text, MovieLens user tags, aggregate ratings, and feedback counts are excluded. Timestamps represent rating, review, or video-interaction times; Luxury Beauty records are not interpreted as purchases.

After filtering, we split each dataset globally by time into approximately 80\% training, 10\% validation, and 10\% test, keeping equal timestamps in the same partition. We then select at most 1,000 users using only training-period activity, requiring at least 10, 8, and 10 positive training interactions on the three datasets, respectively. The user cohorts are fixed across training seeds. Consequently, their partition counts need not follow the global 80/10/10 ratio. Table~\ref{tab:dataset} reports the resulting statistics. The recent-context representation is derived from observed history; no explicit user query is supplied.

\begin{table*}[t]
    \centering
    \small
    \setlength{\tabcolsep}{4pt}
    \caption{Statistics of the selected user cohorts. The user count includes all selected users, not only users with test events. Train counts retained positive interactions; validation and test count scored next-item targets. Candidates comprise the full static metadata catalog. Coverage is the proportion of targets contained in this catalog.}
    \label{tab:dataset}
    \resizebox{0.7\textwidth}{!}{%
    \begin{tabular}{lrrrrrcc}
        \toprule
        Dataset & Users & Candidates & Train & Validation & Test
        & Val. coverage & Test coverage \\
        \midrule
        MovieLens-10M      & 1,000 & 10,681 & 76,016 & 1,651  & 913    & 100\% & 100\% \\
        Luxury Beauty 2018 &   948 & 12,111 & 10,789 & 981    & 491    & 100\% & 100\% \\
        KuaiRec 2.0        & 1,000 & 10,728 & 98,378 & 12,647 & 12,421 & 100\% & 100\% \\
        \bottomrule
    \end{tabular}}
\end{table*}

\paragraph{Candidate and evaluation protocol}
All methods rank the same fixed metadata catalog exactly, without sampled evaluation negatives. Previously interacted items are masked, and evaluation targets are never added to the catalog. Validation begins by replaying training history; test evaluation independently replays training and validation history. Shared parameters remain frozen. Each target is predicted and scored before it is revealed and used to update memory. States are isolated by user, events are not written twice, and generated recommendations are never treated as feedback.

\paragraph{Baselines}
We compare MATE with GRU4Rec~\cite{hidasi2016gru4rec}, SASRec~\cite{kang2018sasrec}, TiSASRec~\cite{li2020tisasrec}, CLSR~\cite{zheng2022clsr}, and TTT4Rec~\cite{yang2024ttt4rec}. We also include Oracle4Rec~\cite{xia2025oracle4rec} and PCTM~\cite{petrov2026pctm}. CLSR uses long--short fusion without the original contrastive objective. PCTM uses training-only transition statistics and no item embeddings. Base uses \textsf{MATE}'s backbone without adaptive memories, while Single uses one user-memory matrix. All methods share the data split, candidate catalog, and evaluation protocol.

\paragraph{Evaluation metrics}
We report event-averaged Recall@$K$ and NDCG@$K$ for $K\in\{5,10\}$. Each event has one relevant next item, so Recall equals Hit Rate. Overall metrics include uncovered targets with zero credit; covered-only metrics use targets in the catalog. Coverage is 100\% here, so these two results coincide. Stochastic methods use 3 seeds with means and sample standard deviations reported. PCTM is deterministic and uses one validation-selected fit. Paired uncertainty estimates resample users with their event sequences intact.

\paragraph{Implementation details}
Item text is encoded offline with \texttt{qwen3-embedding:\allowbreak 4b} through Ollama into frozen, 256-dimensional, $\ell_2$-normalized vectors. \textsf{MATE} uses two pre-normalized causal Transformer layers with four heads, hidden size 256, FFN size 512, and a context of at most 64 interactions.
Each user has two $256\times512$ memory matrices. During evaluation, memory replay covers the full observed history, beyond the context window. The recent window is seven days, the similarity threshold is 0.7, and the short-write floor is $\epsilon=0.2$. Long writes require matches in at least two older calendar months and at least 30 days between the earliest match and the revealed event. We use $(\eta_L,\gamma_L)=(0.03,0.001)$, $(\eta_S,\gamma_S)=(0.15,0.1)$, score temperature 0.07, and temporal-loss coefficient 0.1. Write increments and memory matrices
are norm-limited.

Internal variants use AdamW with learning rate $3\times10^{-4}$, weight decay 0.01, and training blocks of eight events. Within each seed, they share a two-epoch Base warm start, followed by five continuation epochs per method, including Base. External neural adapters train for seven epochs with 63 sampled training negatives and method-specific objectives and batch settings; equal epochs do not imply equal optimizer steps or compute. KuaiRec training uses at most the first 200 events per user, while evaluation retains the full observed history. Hyperparameter and checkpoint selection use validation NDCG@10 only. Experiments use PyTorch 2.14.0 on a Mac mini with an M4 chip and 16~GB unified memory: internal models run on MPS and external baselines on CPU.

\subsection{Overall Effectiveness}
\label{sec:overall}

To answer \textbf{RQ1}, we compare \textsf{MATE} with sequential baselines
under the same chronological split, fixed candidate catalog,
observed-history mask, and exact full-ranking protocol. Hyperparameters
and checkpoints are selected by validation NDCG@10, with the test split
reserved for final evaluation. Base, Single, and \textsf{MATE} use a
two-epoch warm-start selection budget followed by five continuation
epochs per method. External neural adapters train for seven epochs;
however, different objectives and batch settings mean that equal epoch
counts do not imply equal computation. Table~\ref{tab:main_results}
summarizes the overall results.

\textsf{MATE} achieves the best mean performance across all reported
metrics on all three datasets. Relative to the strongest external
baseline on NDCG@10, its improvements are approximately 8.6\% on
MovieLens-10M, 13.2\% on Luxury Beauty, and 7.0\% on KuaiRec. Its lead
at both cutoffs indicates that the benefit is not restricted to one
recommendation-list length. The strongest alternatives differ across
datasets: TTT4Rec leads the external baselines on MovieLens-10M,
whereas PCTM leads on Luxury Beauty. On KuaiRec, TTT4Rec is the strongest
external method on three metrics, while Oracle4Rec has the highest
external NDCG@5. Thus, \textsf{MATE}'s advantage holds against different
competitive approaches rather than a single consistently weak comparator.

Within the shared-backbone family, \textsf{MATE} improves NDCG@10 over
Base by approximately 93.8\%, 46.6\%, and 28.1\% on MovieLens-10M,
Luxury Beauty, and KuaiRec, respectively. Single also outperforms Base
on all reported metrics, suggesting that user-memory adaptation adds
useful predictive information beyond the shared backbone and frozen
item semantics. \textsf{MATE} further improves on Single across all
three datasets. This pattern is consistent with the motivation for
maintaining persistent and recent information separately, rather than
writing every interaction into one undifferentiated memory. Nevertheless,
Single has half the adaptive-memory capacity, so this comparison cannot
isolate temporal decomposition from additional storage. The equal-capacity
ablations provide the more direct component comparisons; the overall
results establish the performance ordering without attributing the gains
to any single mechanism.

\begin{table*}[t]
    \centering
    \small
    \setlength{\tabcolsep}{4pt}
    \caption{Overall recommendation results. The best result in each column is bold, and the second-best is underlined. For the MovieLens-10M, Luxury Beauty, and KuaiRec datasets, a paired \textit{t}-test is conducted, yielding p-values of 0.0129, 0.0203, and 0.0211, respectively, all below the threshold of 0.05, indicating statistically significant results.}
    \label{tab:main_results}
    \resizebox{\textwidth}{!}{%
    \begin{tabular}{l*{12}{c}}
        \toprule
        & \multicolumn{4}{c}{MovieLens-10M}
        & \multicolumn{4}{c}{Luxury Beauty}
        & \multicolumn{4}{c}{KuaiRec} \\
        \cmidrule(lr){2-5}\cmidrule(lr){6-9}\cmidrule(lr){10-13}
        Method & R@5 & R@10 & N@5 & N@10
               & R@5 & R@10 & N@5 & N@10
               & R@5 & R@10 & N@5 & N@10 \\
        \midrule
        GRU4Rec
        & 0.0099 & 0.0179 & 0.0070 & 0.0096
        & 0.0075 & 0.0170 & 0.0041 & 0.0071
        & 0.0128 & 0.0155 & 0.0115 & 0.0124 \\
        SASRec
        & 0.0073 & 0.0117 & 0.0047 & 0.0061
        & 0.0027 & 0.0143 & 0.0013 & 0.0049
        & 0.0153 & 0.0171 & 0.0138 & 0.0149 \\
        TiSASRec
        & 0.0095 & 0.0146 & 0.0060 & 0.0076
        & 0.0068 & 0.0122 & 0.0032 & 0.0050
        & 0.0136 & 0.0152 & 0.0124 & 0.0137 \\
        CLSR
        & 0.0058 & 0.0084 & 0.0031 & 0.0039
        & 0.0088 & 0.0129 & 0.0052 & 0.0066
        & 0.0151 & 0.0169 & 0.0146 & 0.0158 \\
        TTT4Rec
        & \underline{0.0131} & \underline{0.0226}
        & \underline{0.0085} & \underline{0.0116}
        & 0.0136 & 0.0177 & 0.0069 & 0.0083
        & \underline{0.0211} & \underline{0.0230}
        & 0.0191 & \underline{0.0213} \\
        Oracle4Rec
        & 0.0117 & 0.0204 & 0.0076 & 0.0104
        & 0.0095 & 0.0177 & 0.0046 & 0.0072
        & 0.0208 & 0.0227 & \underline{0.0192} & 0.0205 \\
        PCTM
        & 0.0099 & 0.0164 & 0.0070 & 0.0092
        & \underline{0.0143} & \underline{0.0183}
        & \underline{0.0102} & \underline{0.0114}
        & 0.0188 & 0.0198 & 0.0183 & 0.0195 \\
        \midrule
        Base
        & 0.0073 & 0.0135 & 0.0045 & 0.0065
        & 0.0093 & 0.0156 & 0.0067 & 0.0088
        & 0.0177 & 0.0182 & 0.0163 & 0.0178 \\
        Single
        & 0.0114 & 0.0189 & 0.0084 & 0.0093
        & 0.0103 & 0.0164 & 0.0087 & 0.0103
        & 0.0202 & 0.0227 & 0.0184 & 0.0201 \\
        \rowcolor{lightgray!30}
        \textsf{MATE}
        & \textbf{0.0149} & \textbf{0.0234}
        & \textbf{0.0105} & \textbf{0.0126}
        & \textbf{0.0157} & \textbf{0.0207}
        & \textbf{0.0119} & \textbf{0.0129}
        & \textbf{0.0228} & \textbf{0.0253}
        & \textbf{0.0204} & \textbf{0.0228} \\
        \bottomrule
    \end{tabular}}
\end{table*}

\subsection{Component Effectiveness}
\label{sec:ablation}

For \textbf{RQ2}, we examine memory writing, temporal supervision, dynamic reading, and continued online adaptation through controlled variants of \textsf{MATE}. \textbf{Base} retains the shared recommendation backbone and frozen item embeddings but has no user memory. \textbf{Single} replaces the long- and short-term matrices with one matrix, uses a unit write coefficient for every revealed interaction, and reads both temporal auxiliary targets from that same memory. \textbf{w/o Temporal Evidence} retains the two matrices, their separate update rates and decay settings, temporal supervision, and dynamic reading, but sets $w^L=w^S=1$. It therefore removes evidence-dependent write weighting rather than removing memory adaptation.

\textbf{w/o Temporal Supervision} sets the coefficient of $\mathcal{L}_{\mathrm{time}}$ to zero while retaining evidence-weighted writes and dynamic reading. \textbf{Fixed Reading} retains both memories, temporal evidence, and supervision, but replaces the learned reading gates with $\alpha^L=\alpha^S=0.5$ during training and testing. \textbf{Frozen Online Memory} reuses the full \textsf{MATE} checkpoint: memories are constructed by replaying training and validation history, then held fixed during testing. Newly revealed test interactions still update the observed history, recent context, and candidate mask. \textbf{MATE} retains all components and continues writing after each real test interaction is revealed. Training-time variants are retrained under the same budget, with hyperparameters and checkpoints selected using validation data. All test predictions precede target revelation.

Table~\ref{tab:ablation} shows that \textsf{MATE} outperforms every retrained variant on both metrics across all three datasets. Removing temporal evidence or temporal supervision lowers NDCG@10 by approximately 4.4--19.0\% and 6.6--25.4\%, respectively, relative to \textsf{MATE}. Fixed Reading produces the largest reduction among the retrained two-memory variants, lowering NDCG@10 by 32.5--56.3\%. These comparisons support evidence-dependent writing, memory-specific supervision, and context-dependent reading under the evaluated settings. Single also trails \textsf{MATE}, but its smaller state capacity prevents attributing that difference solely to temporal decomposition. The other retrained memory ablations retain the same two-matrix capacity.

Frozen Online Memory matches \textsf{MATE} on both metrics at the reported precision across all three datasets. Thus, these results do not demonstrate an additional Top-10 benefit from continued test-time writing. This control remains relevant because it separates the use of adapted memory from the benefit of further online updates; equality of rounded means does not establish identical predictions or event-level outcomes.

\begin{table*}[t]
    \centering
    \small
    \setlength{\tabcolsep}{4pt}
    \caption{Ablation results averaged over three seeds. Bold marks the best mean in each column, including ties. Memory denotes adaptive matrices stored in float32, excluding model weights and cached history. Frozen Online Memory reuses the MATE checkpoint without retraining.}
    \label{tab:ablation}
    \resizebox{\textwidth}{!}{%
    \begin{tabular}{lccccccc}
        \toprule
        & \multicolumn{2}{c}{MovieLens-10M}
        & \multicolumn{2}{c}{Luxury Beauty}
        & \multicolumn{2}{c}{KuaiRec} & \\
        \cmidrule(lr){2-3}\cmidrule(lr){4-5}\cmidrule(lr){6-7}
        Variant & R@10 & N@10 & R@10 & N@10 & R@10 & N@10
        & Memory (MiB/user) \\
        \midrule
        Base
        & 0.0135 & 0.0065
        & 0.0156 & 0.0088
        & 0.0182 & 0.0178 & 0 \\
        Single
        & 0.0189 & 0.0093
        & 0.0164 & 0.0103
        & 0.0227 & 0.0201 & 0.5 \\
        w/o Temporal Evidence
        & 0.0202 & 0.0102
        & 0.0187 & 0.0117
        & 0.0236 & 0.0218 & 1.0 \\
        w/o Temporal Supervision
        & 0.0197 & 0.0094
        & 0.0184 & 0.0113
        & 0.0241 & 0.0213 & 1.0 \\
        Fixed Reading
        & 0.0110 & 0.0055
        & 0.0144 & 0.0079
        & 0.0166 & 0.0154 & 1.0 \\
        Frozen Online Memory
        & \textbf{0.0234} & \textbf{0.0126}
        & \textbf{0.0207} & \textbf{0.0129}
        & \textbf{0.0253} & \textbf{0.0228} & 1.0 \\
        \rowcolor{lightgray!30}
        \textsf{MATE}
        & \textbf{0.0234} & \textbf{0.0126}
        & \textbf{0.0207} & \textbf{0.0129}
        & \textbf{0.0253} & \textbf{0.0228} & 1.0 \\
        \bottomrule
    \end{tabular}}
\end{table*}

\subsection{User Adaptation}
\label{sec:adaptation_analysis}

For \textbf{RQ3}, we test whether \textsf{MATE} can adapt to a recent interest while retaining useful earlier information. Predicting recent behavior alone could reward excessive recency bias; predicting actual returns to an earlier theme additionally tests retention. We identify observed histories with a recurring earlier theme A followed by a distinct recent theme B. A recurs in at least two older calendar months spanning 30 days. B forms a consecutive suffix within seven days and accounts for at most 20\% of older labeled interactions. Themes are fixed metadata groups or training-fitted semantic groups used only for analysis, never additional model inputs. After 1, 2, 3, or 5 observed B interactions, \textbf{Continue B} evaluates a real next item belonging to B, testing adaptation. \textbf{Return to A} evaluates a real next item belonging to A, testing prediction of a return rather than merely assigning high scores to A items. Eligibility uses only the observed prefix; the target theme labels the evaluation outcome. Predictions precede target revelation and memory updating.

Table~\ref{tab:adaptation} pools the preset B-count positions. \textsf{MATE} exceeds Base and Single in mean NDCG@10 for both outcomes on all three datasets. Better Continue B performance supports adaptation to recent behavior, while better Return to A performance suggests that this responsiveness does not sacrifice prediction of recurring earlier interests in these cohorts. Together, these outcomes support the intended adaptation--retention balance, without isolating the long-term matrix's causal contribution. Small cohorts and pooled B counts limit conclusions about individual users or adaptation at each streak length.

Frozen matches \textsf{MATE} at the reported precision, so these results do not establish an additional benefit from test-period memory writes. Table~\ref{tab:memory_analysis} adds diagnostics over all test events. Long writes trigger in 62--65\% of MovieLens/Luxury events but only 6.1\% of KuaiRec events; mean short-memory read contributions exceed long-memory contributions on every dataset. These statistics show dataset-dependent memory usage, but activity alone does not establish useful retention. The comparisons are descriptive, without a statistical-significance claim.

\begin{table*}[t]
\centering
\small
\setlength{\tabcolsep}{4pt}
\caption{NDCG@10 for recent-theme continuation (Continue B) and actual returns to the recurring earlier theme (Return to A). Values are means over three training seeds, pooling observed-B counts 1, 2, 3, and 5. $N/U$ denotes unique evaluated events/users per seed. Bold marks the best mean, including ties. Frozen disables test-period memory writes.
Candidate coverage is 100\%.}
\label{tab:adaptation}
\resizebox{0.7\textwidth}{!}{%
\begin{tabular}{llrcccc}
\toprule
Dataset & Next real target & $N/U$ & Base & Single & MATE & Frozen \\
\midrule
MovieLens-10M & Continue B & 74/21
& 0.0055 & 0.0068 & \textbf{0.0146} & \textbf{0.0146} \\
MovieLens-10M & Return to A & 149/24
& 0.0032 & 0.0037 & \textbf{0.0049} & \textbf{0.0049} \\
\addlinespace
Luxury Beauty & Continue B & 51/44
& 0.0067 & 0.0109 & \textbf{0.0141} & \textbf{0.0141} \\
Luxury Beauty & Return to A & 49/44
& 0.0041 & 0.0076 & \textbf{0.0103} & \textbf{0.0103} \\
\addlinespace
KuaiRec & Continue B & 280/209
& 0.0197 & 0.0214 & \textbf{0.0247} & \textbf{0.0247} \\
KuaiRec & Return to A & 991/496
& 0.0135 & 0.0152 & \textbf{0.0184} & \textbf{0.0184} \\
\bottomrule
\end{tabular}}
\end{table*}

\begin{table*}[t]
\centering
\small
\setlength{\tabcolsep}{4pt}
\caption{MATE mechanism statistics over all real test events. Long writes satisfy $w^L>0$; strong short writes satisfy $w^S>\epsilon=0.2$; valid long proxies have available supervision targets. Write evidence is measured after event revelation; actual reads $\|\alpha^r M_r z\|_2$ are measured before the prediction target is revealed. Read contributions are means over three seeds; evidence rates are shared across seeds. $N/U$ denotes events/users per seed.}
\label{tab:memory_analysis}
\resizebox{0.8\textwidth}{!}{%
\begin{tabular}{lrrrrrr}
\toprule
Dataset & $N/U$
& \shortstack{Long write\\(\%)}
& \shortstack{Strong short\\write (\%)}
& \shortstack{Valid long\\proxy (\%)}
& Long read & Short read \\
\midrule
MovieLens-10M & 913/30
& 62.3 & 11.1 & 99.9 & 17.727 & 45.927 \\
Luxury Beauty & 491/234
& 65.2 & 8.8 & 99.4 & 0.852 & 4.594 \\
KuaiRec & 12421/934
& 6.1 & 12.3 & 73.2 & 1.712 & 19.967 \\
\bottomrule
\end{tabular}}
\end{table*}

\subsection{Efficiency Analysis}
\label{sec:efficiency}

For \textbf{RQ4}, we compare Base, Single, and \textsf{MATE} using the same candidate catalog, sequence-length setting, and batch size within each dataset. For each of three seeds, we time 200 sequential positions over three repetitions after 10 warm-up events, synchronizing the accelerator at every timing boundary. We separately measure scoring before target revelation and processing after revelation, including temporal evidence and any memory update. Their sum is the total processing cost per event, not scoring-only latency. Base also computes temporal evidence, so event processing is not a pure memory-write measurement. Offline embedding generation and
initial-history replay are excluded.

Table~\ref{tab:efficiency} and Figs.~\ref{fig:efficiency_movielens}--\ref{fig:efficiency_kuairec} report the measured costs. Relative to Base, Single increases mean total cost by 18.3\%, 24.1\%, and 21.2\% on MovieLens-10M, Luxury Beauty, and KuaiRec, respectively. The corresponding increases for \textsf{MATE} are 27.1\%, 37.7\%, and 32.6\%. In absolute terms, \textsf{MATE} adds approximately 1.25--1.32~ms per event over Base and 0.42--0.48~ms over Single. Thus, the additional cost of maintaining two memories is smaller than the cost of introducing adaptation beyond Base.

Tail costs follow the same ordering: \textsf{MATE} increases P95 total cost over Base by 22.9\%, 36.9\%, and 25.6\%, respectively. Candidate-scoring time, however, changes by at most 3.0\% relative to Base. Most additional online cost therefore comes from processing revealed interactions rather than ranking candidates.

Training times include the shared warm-start stage. On MovieLens-10M, Luxury Beauty, and KuaiRec, Single increases training time over Base by 20.0\%, 23.3\%, and 13.6\%, respectively. The corresponding increases for \textsf{MATE} are 28.7\%, 45.2\%, and 22.3\%. These are descriptive three-seed measurements; the greater between-seed variability of \textsf{MATE} on Luxury Beauty limits how precisely
its training overhead can be characterized.

At float32 precision, Base has no fast-weight matrix, Single uses 0.5~MiB per active user, and \textsf{MATE} uses 1.0~MiB. Maintaining separate long- and short-term memories therefore doubles Single's adaptive state. These values exclude cached histories, temporal statistics, item embeddings, shared model weights, and temporary buffers. Peak process memory was not measured, so they represent incremental adaptive storage rather than the full deployment footprint. The state size is predictable, while online overhead is small in absolute milliseconds but non-negligible relative to Base.

\begin{table*}[t]
    \centering
    \small
    \setlength{\tabcolsep}{4pt}
    \caption{Computational and storage costs, averaged over three seeds. Scoring and event processing are mean times; P95 is the mean of
    per-seed 95th-percentile total costs. Training includes the shared
    warm-start stage. Memory counts float32 adaptive matrices only.}
    \label{tab:efficiency}
    \resizebox{0.8\textwidth}{!}{%
    \begin{tabular}{llrrrrrr}
        \toprule
        Dataset & Method
        & \shortstack{Scoring\\(ms)}
        & \shortstack{Event processing\\(ms)}
        & \shortstack{Total\\(ms)}
        & \shortstack{P95 total\\(ms)}
        & \shortstack{Training\\(h)}
        & \shortstack{Memory\\(MiB/user)} \\
        \midrule
        MovieLens-10M & Base
        & 3.377 & 1.392 & 4.769 & 5.717 & 1.479 & 0.0 \\
        MovieLens-10M & Single
        & 3.461 & 2.180 & 5.641 & 6.571 & 1.775 & 0.5 \\
        MovieLens-10M & \textsf{MATE}
        & 3.479 & 2.584 & 6.063 & 7.027 & 1.904 & 1.0 \\
        \addlinespace
        Luxury Beauty & Base
        & 3.233 & 0.268 & 3.501 & 3.745 & 0.219 & 0.0 \\
        Luxury Beauty & Single
        & 3.303 & 1.041 & 4.343 & 4.596 & 0.270 & 0.5 \\
        Luxury Beauty & \textsf{MATE}
        & 3.324 & 1.498 & 4.822 & 5.126 & 0.318 & 1.0 \\
        \addlinespace
        KuaiRec & Base
        & 3.261 & 0.555 & 3.817 & 4.423 & 1.803 & 0.0 \\
        KuaiRec & Single
        & 3.323 & 1.301 & 4.624 & 5.147 & 2.047 & 0.5 \\
        KuaiRec & \textsf{MATE}
        & 3.330 & 1.732 & 5.062 & 5.554 & 2.205 & 1.0 \\
        \bottomrule
    \end{tabular}}
\end{table*}

\begin{figure}[t]
  \centering
  \includegraphics[width=\columnwidth]{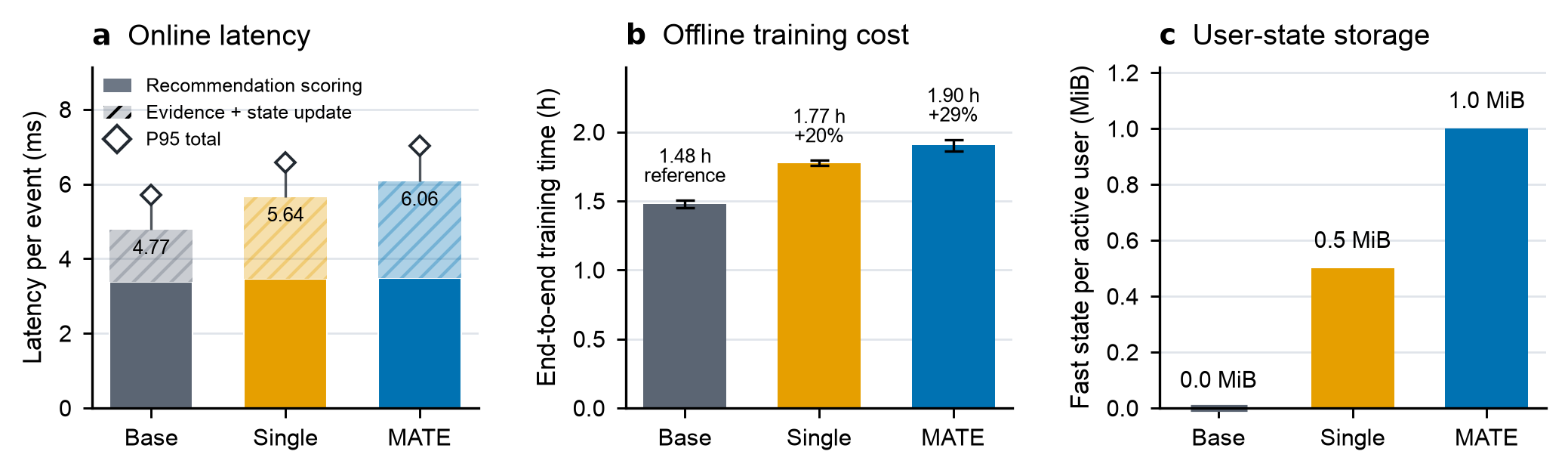}
  \Description{Computational and storage cost analysis for MovieLens-10M, including per-event latency, training time, and adaptive memory usage.}
  \caption{\textbf{Computational and storage costs on MovieLens 10M} (10,681 candidate items). (a) Mean per-event cost decomposed into candidate scoring and revealed-event processing; open diamonds show the mean of the per-seed P95 total costs. (b) End-to-end training time including the shared warm-start stage. (c) Float32 fast-weight state per active user. Bars show means over three seeds; error bars on the P95 markers and training bars show sample standard deviations. Offline item-embedding generation and initial-history replay are excluded.}
  \label{fig:efficiency_movielens}
\end{figure}

\begin{figure}[t]
  \centering
  \includegraphics[width=\columnwidth]{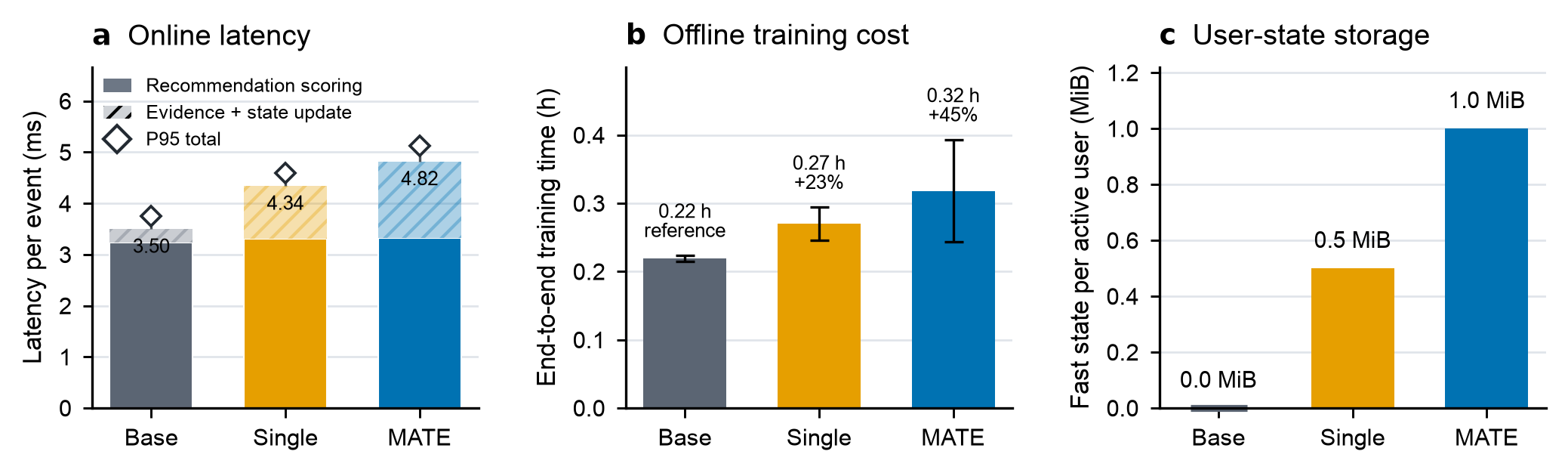}
  \Description{Computational and storage cost analysis for Luxury Beauty, including per-event latency, training time, and adaptive memory usage.}
  \caption{\textbf{Computational and storage costs on Luxury Beauty} (12,111 candidate items). (a) Mean per-event cost decomposed into candidate scoring and revealed-event processing; open diamonds show the mean of the per-seed P95 total costs. (b) End-to-end training time including the shared warm-start stage. (c) Float32 fast-weight state per active user. Bars show means over three seeds; error bars on the P95 markers and training bars show sample standard deviations. Offline item-embedding generation and initial-history replay are excluded.}
  \label{fig:efficiency_luxury}
\end{figure}

\begin{figure}[t]
  \centering
  \includegraphics[width=\columnwidth]{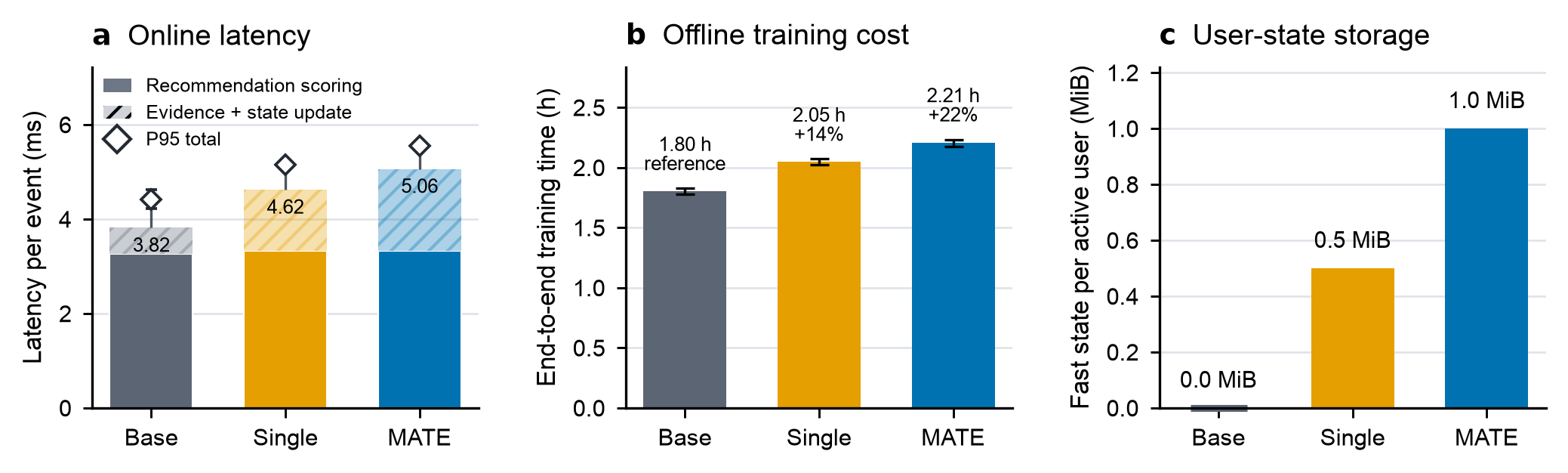}
  \Description{Computational and storage cost analysis for KuaiRec, including per-event latency, training time, and adaptive memory usage.}
  \caption{\textbf{Computational and storage costs on KuaiRec}
  (10,728 candidate items). (a) Mean per-event cost decomposed into candidate scoring and revealed-event processing; open diamonds show the mean of the per-seed P95 total costs. (b) End-to-end training time including the shared warm-start stage. (c) Float32 fast-weight state per active user. Bars show means over three seeds; error bars on the P95 markers and training bars show sample standard deviations. Offline item-embedding generation and initial-history replay are excluded.}
  \label{fig:efficiency_kuairec}
\end{figure}

\subsection{Qualitative User Understanding}
\label{sec:case_study}

For \textbf{RQ5}, we present one real Luxury Beauty example that makes the memory mechanism directly inspectable. This is an outcome-conditioned positive case selected after evaluation for readability; it was not used for checkpoint or hyperparameter selection and is not an estimate of how often \textsf{MATE} improves recommendations. The user identifier is anonymized, while item IDs, dates, ratings, and saved predictions remain traceable to the experiment records.

Table~\ref{tab:case_study} contains the observed prefix, write evidence, subsequent memory read, and recommendations in one place. The prefix contains an OPI thinner interaction and eight essie polish interactions in 2016, followed by another observed OPI polish in 2018. After that last event was revealed, similar behavior was supported in both historical months across a 540-day span. It therefore received a full long-term write ($w^L=1.0$). No matching behavior occurred in the seven-day short window ($C=0.0$), so the short memory received its minimum write ($w^S=0.2$).

At the following prediction, both memory gates were active ($\alpha^L=0.619$, $\alpha^S=0.567$). The actual read contributions were $\lVert\alpha^L M_L k\rVert_2=0.157$ and $\lVert\alpha^S M_S k\rVert_2=3.242$; hence the short-memory contribution was 20.7 times the long-memory contribution. Thus, a nonzero long-term read coexists with a much larger short-term read at this position. This is a mechanism diagnostic rather than proof that either branch alone caused the ranking change.

The real next item was another OPI polish. Under the same 12,111-item catalog, identical history masking, and the same 10-event prefix, its rank improves from 18 for Base and 3 for Single to 2 for \textsf{MATE}. \textsf{MATE}'s Top-3 also contains two OPI products, with the real target at position~2. Frozen Online Memory ties \textsf{MATE} at rank~2; therefore, this example shows that the dual-memory \textsf{MATE} configuration ranks the target above Base and Single, but it does not isolate an additional discrete-rank benefit from the latest test-time write.

All ranks and recommendation lists were saved before event~10 was revealed. The write weights computed from that target are excluded and can affect only later predictions. Events~9 and~10 share the same date-level timestamp, so their sequence follows the recorded event IDs; the data do not establish a within-day clock-time order.

\begin{table*}[t]
  \centering
  \small
  \setlength{\tabcolsep}{4pt}
  \caption{One positive real-user case from Luxury Beauty (seed~42).
  All methods use the same 12,111-item catalog and history
  mask at the same observed prefix before the real target is revealed. Lower
  target rank is better; a bold title marks the real target when it appears in
  a Top-3. The case is a mechanism illustration, not a representative
  performance estimate.}
  \label{tab:case_study}
  \begin{tabularx}{\textwidth}{p{1.35cm} p{2.05cm} X p{3.25cm}}
    \toprule
    Phase & Date/method & Observed interaction or saved Top-3 & Evidence/result \\
    \midrule
    Observed & 2016-10-10
      & \texttt{LUX18\_B000PZ8E62}: \emph{OPI Nail Lacquer Thinner, 2 Fl Oz}
      & Rating 5.0 \\
    Observed & 2016-11-18
      & Eight distinct \emph{essie Nail Color Polish} items
      & Eight interactions, all rating 5.0 \\
    Observed & 2018-04-03
      & \texttt{LUX18\_B000142FVW}: \emph{OPI Nail Lacquer, Not So Bora-Bora-ing Pink, 0.5 Fl Oz}
      & Rating 5.0; after reveal:
        $w^L=1.0$, $w^S=0.2$;
        2/2 matching months \\
    \addlinespace
    MATE read & Before target
      & $\alpha^L=0.619$,
        $\alpha^S=0.567$; both proxies valid
      & Contributions: L=0.157,
        S=3.242 \\
    \addlinespace
    Prediction & Base
      & 1. essie Nail Color Polish; 2. essie Nail Color Polish; 3. essie Nail Color Polish
      & Target rank 18 \\
    Prediction & Single
      & 1. essie Nail Color Polish; 2. essie Nail Color Polish; 3. \textbf{OPI Nail Lacquer, Cajun Shrimp, 0.5 fl. oz.}
      & Target rank 3 \\
    Prediction & Frozen Online Memory
      & 1. essie Nail Color Polish; 2. \textbf{OPI Nail Lacquer, Cajun Shrimp, 0.5 fl. oz.}; 3. OPI Nail Lacquer, My Very First Knockwurst, 0.5 fl. oz.
      & \textbf{Target rank 2} \\
    Prediction & MATE
      & 1. essie Nail Color Polish; 2. \textbf{OPI Nail Lacquer, Cajun Shrimp, 0.5 fl. oz.}; 3. OPI Nail Lacquer, My Very First Knockwurst, 0.5 fl. oz.
      & \textbf{Target rank 2} \\
    \addlinespace
    Target & 2018-04-03
      & \texttt{LUX18\_B000NG80GM}: \emph{OPI Nail Lacquer, Cajun Shrimp, 0.5 fl. oz.}
      & Rating 5.0; revealed only after ranking \\
    \bottomrule
  \end{tabularx}
\end{table*}

\section{Conclusion and Broad Impact}
\label{sec:conclusion}

In this work, we studied evolving user interests in LLM-enhanced sequential recommendation with frozen item embeddings. We proposed \textsf{\textbf{MATE}}, a memory adaptation framework that uses temporal evidence to update two user-specific memories with different adaptation dynamics. Long-term evidence captures recurring behavior across periods, while short-term evidence reflects recent consistency. Context-dependent reading determines their contributions to the current recommendation. The shared model is trained offline, and user memories are updated only after real interactions become observable.

Experiments on MovieLens-10M, Amazon Luxury Beauty, and KuaiRec show that \textsf{MATE} achieves the best mean performance across the reported recommendation metrics. Equal-capacity ablations support the roles of temporal evidence, temporal supervision, and dynamic reading. Real-sequence analyses further show improved prediction of both continued recent interests and actual returns to earlier recurring themes. However, Frozen Online Memory matches \textsf{MATE}'s reported Top-$K$ metrics, so these experiments do not establish an additional benefit from continued test-time writing. Efficiency measurements show that adaptation introduces additional computation and user-state storage. Together, the results support temporal evidence as a useful control signal for memory adaptation, while leaving its incremental online benefit open to further evaluation.

\paragraph{Broad Impact}
MATE offers a framework for personalized recommendation that combines persistent information with adaptation to recent behavior. This design may also inform other interactive systems with evolving user states. Its deployment nevertheless requires careful handling of behavioral histories and persistent memories. Privacy protection, memory management, and safeguards against reinforcing narrow or outdated interests remain important considerations. Predictive improvements should not be interpreted as direct understanding of users' motivations, and practical evaluation should consider user benefit alongside recommendation accuracy and computational cost.

\appendix

\section{Detailed Theoretical Analysis}
\label{app:mate_theory}

\begin{proof}[Proof of Theorem~\ref{thm:selective_writing}]
Let
$N=N^L_{u,t}=\sum_{m\in\mathcal{M}_{u,t}}z^m_{u,t}$.
Removing the time-span condition can only increase the
long-term write evidence. Hence,
\begin{equation}
    0\leq w^L_{u,t}
    \leq
    \frac{N}{n}\mathbb{I}[N\geq2].
\end{equation}

By independence and the upper bound on each monthly support
probability, $N$ is stochastically dominated by
$\widetilde N\sim\operatorname{Binomial}(n,p)$.
This follows by coupling the indicators using independent uniform
random variables and increasing each success probability to $p$.
Since $x\mapsto(x/n)\mathbb{I}[x\geq2]$ is nondecreasing on
the nonnegative integers,
\begin{align}
    \mathbb{E}[w^L_{u,t}]
    &\leq
    \mathbb{E}\left[
        \frac{\widetilde N}{n}
        \mathbb{I}[\widetilde N\geq2]
    \right]\\
    &=
    \frac{\mathbb{E}[\widetilde N]}{n}
    -
    \frac{\Pr(\widetilde N=1)}{n}\\
    &=
    p-p(1-p)^{n-1}.
\end{align}

Finally, applying
$1-(1-p)^{n-1}\leq(n-1)p$
gives
\begin{equation}
    \mathbb{E}[w^L_{u,t}]
    \leq
    p\bigl[1-(1-p)^{n-1}\bigr]
    \leq
    (n-1)p^2.
\end{equation}
The additional time-span condition cannot invalidate this upper
bound because it only suppresses otherwise eligible writes.
\end{proof}

\begin{proof}[Proof of Theorem~\ref{thm:memory_stability}]
For $a>0$, define radial clipping by
\begin{equation}
    \Pi_a(\mathbf X)
    =
    \mathbf X
    \min\left\{1,\frac{a}{\|\mathbf X\|_F}\right\},
    \qquad
    \Pi_a(\mathbf 0)=\mathbf 0.
\end{equation}
This is projection onto the Frobenius-norm ball of radius $a$.
Projection optimality and the Cauchy--Schwarz inequality give
\begin{align}
    &\|\Pi_a(\mathbf X)-\Pi_a(\mathbf Y)\|_F^2
    \notag\\
    &\quad\leq
    \langle
        \Pi_a(\mathbf X)-\Pi_a(\mathbf Y),
        \mathbf X-\mathbf Y
    \rangle_F
    \notag\\
    &\quad\leq
    \|\Pi_a(\mathbf X)-\Pi_a(\mathbf Y)\|_F
    \|\mathbf X-\mathbf Y\|_F.
\end{align}
Thus, clipping is nonexpansive:
\begin{equation}
    \|\Pi_a(\mathbf X)-\Pi_a(\mathbf Y)\|_F
    \leq
    \|\mathbf X-\mathbf Y\|_F.
\end{equation}

The following argument applies to either memory.
Fix $r=L$ or $r=S$.
Including the clipping operations, the update is
\begin{equation}
    \mathbf M^r_{u,j}
    =
    \Pi_B\left(
        \rho_r\mathbf M^r_{u,j-1}
        +
        \Pi_c\left(
            \eta_rw^r_{u,j}
            \mathbf e_{u,j}
            \mathbf k_{u,j-1}^{\top}
        \right)
    \right).
\end{equation}

At step $t$, the reference retains
$\Pi_B(\rho_r\mathbf M^r_{u,t-1})$
but omits the additive write.
Since $\|\mathbf e_{u,t}\|_2=1$, nonexpansiveness gives
\begin{align}
    &\|\mathbf M^r_{u,t}
       -\widetilde{\mathbf M}^r_{u,t}\|_F
    \notag\\
    &\quad\leq
    \left\|
        \Pi_c\left(
            \eta_rw^r_{u,t}
            \mathbf e_{u,t}
            \mathbf k_{u,t-1}^{\top}
        \right)
    \right\|_F
    \notag\\
    &\quad\leq
    \min\left\{
        c,\eta_rw^r_{u,t}
        \|\mathbf k_{u,t-1}\|_2
    \right\}.
\end{align}

For every later step $j>t$, the two trajectories receive identical
write increments. These increments cancel inside the projection
comparison, yielding
\begin{equation}
\begin{aligned}
    &\|\mathbf M^r_{u,j}
       -\widetilde{\mathbf M}^r_{u,j}\|_F\\
    &\quad\leq
    \rho_r
    \|\mathbf M^r_{u,j-1}
       -\widetilde{\mathbf M}^r_{u,j-1}\|_F.
\end{aligned}
\end{equation}
Iterating for $s$ steps proves
Eqs.~\eqref{eq:long_influence_bound}
and~\eqref{eq:short_influence_bound}.

When clipping is inactive, the difference is exactly
\begin{equation}
    \mathbf M^r_{u,t+s}
    -
    \widetilde{\mathbf M}^r_{u,t+s}
    =
    \rho_r^s\eta_rw^r_{u,t}
    \mathbf e_{u,t}\mathbf k_{u,t-1}^{\top}.
\end{equation}
With clipping, every updated memory additionally satisfies
\begin{equation}
    \|\mathbf M^r_{u,j}\|_F\leq B.
\end{equation}
\end{proof}

\bibliographystyle{ACM-Reference-Format}
\bibliography{references}

\end{document}